\def\PYG@reset{\let\PYG@it=\relax \let\PYG@bf=\relax%
    \let\PYG@ul=\relax \let\PYG@tc=\relax%
    \let\PYG@bc=\relax \let\PYG@ff=\relax}
\def\PYG@tok#1{\csname PYG@tok@#1\endcsname}
\def\PYG@toks#1+{\ifx\relax#1\empty\else%
    \PYG@tok{#1}\expandafter\PYG@toks\fi}
\def\PYG@do#1{\PYG@bc{\PYG@tc{\PYG@ul{%
    \PYG@it{\PYG@bf{\PYG@ff{#1}}}}}}}
\def\PYG#1#2{\PYG@reset\PYG@toks#1+\relax+\PYG@do{#2}}
\def\hyper@link#1#2#3{#3}%
\let\hyper@linkstart\@gobbletwo
\let\hyper@linkend\@empty
\title{Splitting vertices of bipartite graphs preserves de Bruijn-Erd\H{o}s property}
\author[1]{Laurent Beaudou\footnote{\tt laurent.beaudou@uca.fr}}
\author[2]{Guillermo Gamboa Quintero\footnote{\tt gamboa@iuuk.mff.cuni.cz}}
\affil[1]{Université Clermont-Auvergne, CNRS, Mines de Saint-Étienne\\
Clermont-Auvergne-INP, LIMOS, 63000 Clermont-Ferrand, France}
\affil[2]{Computer Science Institute of Charles University, Charles University, Prague, Czechia}
\date{\today}                     %% if you don't need date to appear
\newtheorem{theorem}{Theorem}
\newtheorem{lemma}[theorem]{Lemma}
\newtheorem*{problem}{Problem}
\newtheorem*{conj}{Chen-Chv\'atal conjecture}
\theoremstyle{remark}
\newtheorem{claim}[theorem]{Claim}
\newcommand*{\Line}[1]{\ensuremath{\overline{#1}}}
\newcommand*{\Distt}{dist }
\newcommand*{\Distm}{\text{dist}}
\newcommand*{\Distx}[1]{\ensuremath{\text{dist}(#1)}}
\begin{document}

\maketitle

\begin{abstract}
  In this note, we prove that every graph obtained from a bipartite
  graph by iteratively splitting vertices into two adjacent twins has
  the de~Bruijn-Erd\H{o}s property.
\end{abstract}

%% ** introduction

Given a metric space $(X,\Distm)$, we follow the natural definition of betweenness introduced by Menger~\cite{m1928} in 1928: an element $b$ is {\em between} elements $a$ and~$c$ if $\Distx{a,b} + \Distx{b,c} = \Distx{a,c}$. We say that elements $a,b$ and $c$ are {\em collinear} if they are pairwise distinct and one of these is between the other two. Eight decades after Menger's work, Chen and Chv\'atal~\cite{cc2008} introduced the notion of line for general metric spaces. Namely, given two points $a$ and $b$ of $X$, the {\em line generated by $a$ and $b$}, denoted \Line{ab}, is the set $\{a,b\}$ augmented by all points forming a collinear triple with $a$ and $b$. In their paper, they formulate a question which has since earned the title of conjecture.

\begin{conj}[Question 1 in \cite{cc2008}]
  Every finite metric space $(X,\Distm)$ where no line consists of the
  entire ground set $X$ determines at least $\lvert X \rvert$ distinct lines.
\end{conj}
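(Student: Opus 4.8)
Since the conjecture itself is open, the aim is to prove it for the class $\mathcal{G}$ of graphs obtained from a connected bipartite graph by a sequence of \emph{vertex splittings}, where splitting at $v$ replaces $v$ by two new adjacent vertices $v_1,v_2$ having the same neighbours outside $\{v_1,v_2\}$ as $v$ did. The plan is to pass to graph metrics and argue by induction on the number $t$ of splittings used to build $G\in\mathcal{G}$ from a bipartite graph. The base case $t=0$ — connected bipartite graphs have the de~Bruijn--Erd\H{o}s property — is invoked as a known fact (or proved separately); all the content is in the inductive step, which is exactly the statement in the title: \emph{if $G$ has the property and $G'$ is obtained from $G$ by a single splitting, then $G'$ has it too}. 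Throughout, ``having the property'' means that some line equals the whole ground set, or there are at least $|V|$ distinct lines.

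Fix $G'$ obtained by splitting $v\in V(G)$ into adjacent twins $v_1,v_2$, and assume $|V(G')|\ge 3$. The first step is a short distance computation: $d_{G'}(v_1,v_2)=1$; for $x\notin\{v_1,v_2\}$ one has $d_{G'}(x,v_1)=d_{G'}(x,v_2)=d_{G}(x,v)$; and for $x,y\notin\{v_1,v_2\}$ one has $d_{G'}(x,y)=d_{G}(x,y)$, since a geodesic of $G'$ never needs both twins. Feeding these identities into the definition of betweenness tells us exactly how lines transform. Let $\pi\colon V(G')\to V(G)$ collapse $v_1,v_2$ to $v$ and fix everything else. Then: for a pair $p$ avoiding $\{v_1,v_2\}$, the line of $G'$ on $p$ equals $\overline{p}^{G}$ if $v\notin\overline{p}^{G}$, and equals $(\overline{p}^{G}\setminus\{v\})\cup\{v_1,v_2\}$ otherwise; for a pair $\{b,v_i\}$ with $b\notin\{v_1,v_2\}$, the line of $G'$ on $\{b,v_i\}$ is $(\overline{bv}^{G}\setminus\{v\})\cup\{v_i\}$, which contains $v_i$ but not the other twin; and $\{v_1,v_2\}$ is itself a (new) line of $G'$. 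Two consequences follow: $\pi$ restricts to a surjection of $\mathcal{L}(G')\setminus\{\{v_1,v_2\}\}$ onto $\mathcal{L}(G)$, whence $|\mathcal{L}(G')|\ge|\mathcal{L}(G)|+1$; and any line $\ell\ni v$ of $G$ that is generated by some pair through $v$ has two distinct preimages under $\pi$, namely $(\ell\setminus\{v\})\cup\{v_1\}$ and $(\ell\setminus\{v\})\cup\{v_2\}$.

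Now the case analysis. If $G$ has no universal line, then $|\mathcal{L}(G)|\ge|V(G)|$ and so $|\mathcal{L}(G')|\ge|V(G)|+1=|V(G')|$: done, whether or not $G'$ has a universal line. If $G$ has a universal line generated by a pair $\{x,y\}$ with $x,y\ne v$, then $v$ lies on it, so its image in $G'$ is $(V(G)\setminus\{v\})\cup\{v_1,v_2\}=V(G')$: $G'$ has a universal line, done. The remaining case is that $G$ has a universal line but every such line is generated only by pairs containing $v$; choose $y$ with $\overline{vy}^{G}=V(G)$. Then the lines of $G'$ on $\{v_1,y\}$ and on $\{v_2,y\}$ are the two distinct sets $V(G')\setminus\{v_2\}$ and $V(G')\setminus\{v_1\}$; together with $\{v_1,v_2\}$ and the $\pi$-preimages of the remaining $|\mathcal{L}(G)|-1$ lines of $G$, this gives $|\mathcal{L}(G')|\ge|\mathcal{L}(G)|+2$, which is enough precisely when $|\mathcal{L}(G)|\ge|V(G)|-1$.

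That last inequality is, I expect, the main obstacle: one must show that a graph of the class all of whose universal lines pass through a single vertex $v$ has at least $|V|-1$ lines. This ought to be true — trees with a universal line are already forced to be paths, in which every pair generates that line (so the bad case does not occur there), while the graphs that do realise the bad case, such as a clique with a single pendant edge, carry far more than $|V|-1$ lines — but establishing it seems to require either a structural lemma to the effect that forcing $v$ into every generating pair of the universal line makes $v$ an extreme point and thereby creates enough extra lines, or else strengthening the inductive hypothesis to carry, for every vertex $v$, the extra clause ``there is a universal line avoiding $v$, or there are at least $|V|-1$ lines'', and checking that this reinforced statement is itself preserved by one splitting. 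Everything else is the routine bookkeeping indicated above.
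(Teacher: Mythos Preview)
Your target is right: the displayed statement is the open Chen--Chv\'atal conjecture, and you correctly aim instead at the paper's Theorem~1 (the bipartite-plus-splittings case). Your distance and line bookkeeping under a single split is accurate---lines through one twin omit the other, $\{v_1,v_2\}$ is a new two-point line, and $\pi$ is a surjection $\mathcal{L}(G')\setminus\{\{v_1,v_2\}\}\to\mathcal{L}(G)$---so Cases~1 and~2 of your inductive step go through.

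The gap you yourself isolate in Case~3 is genuine and is exactly where induction on the number of splittings stalls. When every universal line of $G$ is anchored at the vertex $v$ about to be split, the inductive hypothesis ``$G$ has a universal line or at least $|V(G)|$ lines'' says nothing about $|\mathcal{L}(G)|$, and the missing bound $|\mathcal{L}(G)|\ge|V(G)|-1$ does not follow from anything you have established. Your proposed fix---carrying, for every vertex, the extra clause ``some universal line avoids it, or there are $\ge|V|-1$ lines''---is not obviously self-propagating under a split either; verifying it pushes the same difficulty one level down. Note too that your inductive step, as phrased, would show that a single split preserves the de~Bruijn--Erd\H{o}s property for \emph{any} graph, which is strictly stronger than what the paper claims or needs.

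The paper does not attempt this induction at all. It works globally in the final graph $G$ via the projection $f\colon V(G)\to V(H)$ onto the bipartite source and its fibres (``blobs''). The key structural fact is that $\overline{uv}$ in $G$ meets the blob of $u$ in $\{u\}$ alone when $f(u)\neq f(v)$, and equals $\{u,v\}$ when $f(u)=f(v)$; hence any two vertices lying in rich (non-singleton) blobs generate a line no other pair generates, already yielding $\binom{p}{2}$ distinct lines with $p$ the number of rich-blob vertices. Two further disjoint families---pairs of trivial blobs sharing a rich neighbour, and a rich-blob vertex paired with an adjacent trivial blob---contribute $\binom{\lceil(n-p)/k\rceil}{2}+2k$ more. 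The argument then reduces to the elementary inequality $\binom{p}{2}+\binom{\lceil(n-p)/k\rceil}{2}+2k\ge n$ over the admissible range $2\le p\le n-1$, $1\le k\le p/2$, handled analytically for $n\ge 40$ and by a short computer check below that. The paper thus trades your clean one-split picture for a direct count that never has to control how a universal line degrades under splitting---precisely the place your argument gets stuck.
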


Their original goal was to extend some results from the Euclidean setting to the more general setting of metric spaces. In this specific case, the authors refer to the original work of de Bruijn and Erd\H{o}s~\cite{dbe46}. Therefore, we say that a finite metric space $(X,\Distm)$ has the {\em de Bruijn-Erd\H{o}s Property} if it has a {\em universal line} (that is, a line containing all points of $X$) or at least $ \lvert X \rvert$ lines.

The Chen-Chv\'atal conjecture remains open in the restricted case of metric spaces arising from graphs where $X$ is the vertex set and \Distt is the usual shortest path distance. In his last survey published of 2018, Chv\'atal~\cite{chvatal2018} lists no less than twenty-nine open problems related to the Chen-Chv\'{a}tal conjecture. Some of these have been solved since. We focus ourselves on the eighth problem of the survey.

\begin{problem}[Problem 8 in \cite{chvatal2018}]
Prove that all graphs obtained from bipartite graphs by repeatedly splitting of vertices into adjacent twins have the de Bruijn-Erd\H{o}s property.
\end{problem}

\noindent In this short note, we answer positively to this problem via Theorem~\ref{thm:main}.

%% ** théorème principal

\begin{theorem}
  \label{thm:main}
  Let $G$ be a graph on $n$ vertices (with $n \geq 2$) obtained from a
  bipartite graph by repeated splitting of vertices into adjacent
  twins, then $G$ admits a universal line or has at least $n$ distinct
  lines.
\end{theorem}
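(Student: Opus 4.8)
The plan is to set up an induction on the number of splitting operations. The base case is a bipartite graph, for which the de~Bruijn-Erd\H{o}s property is already known (Chen-Chv\'atal proved it for bipartite graphs, or more precisely for graphs of diameter two and for chordal graphs; the bipartite case was settled by Chv\'atal and others). So assume $G$ is obtained from $G'$ by splitting a vertex $v$ into two adjacent twins $v_1, v_2$, where $G'$ already has the de~Bruijn-Erd\H{o}s property. The key observation I would try to exploit is that splitting into \emph{adjacent} twins changes distances in a very controlled way: for any vertex $x \notin \{v_1,v_2\}$ we have $\Distx{x,v_1} = \Distx{x,v_2} = \Distx{x,v}$ (distance computed in $G'$), and $\Distx{v_1,v_2} = 1$. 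Consequently, shortest paths in $G$ correspond almost exactly to shortest paths in $G'$, except they may pass through at most one of $v_1, v_2$.

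First I would record the precise relationship between lines of $G$ and lines of $G'$. Define a projection $\pi : V(G) \to V(G')$ collapsing $v_1, v_2$ to $v$ and fixing everything else. I expect that for a pair $a,b$ with $\pi(a) \neq \pi(b)$, the line $\Line{ab}^G$ projects onto (a subset closely related to) $\Line{\pi(a)\pi(b)}^{G'}$, and moreover a point $x$ with $\pi(x) = v$ lies on $\Line{ab}^G$ essentially iff $v \in \Line{\pi(a)\pi(b)}^{G'}$ — here one must be careful about the boundary cases where $a$ or $b$ is itself $v_1$ or $v_2$, and about whether $v_1$ and $v_2$ can separate or fail to be collinear with things they used to be collinear with. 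The cleanest sub-lemma to aim for: if $G'$ has a universal line $\Line{ab}^{G'}$, then $G$ has a universal line as well (take preimages of $a,b$, or $a,b$ themselves, and check $v_1,v_2$ get covered — using $\Distx{v_1,v_2}=1$ so $v_1$ is between $v_2$ and anything far, or one of the twins is between $a$ and $b$). So from now on assume $G'$ has no universal line and hence at least $n-1 = |V(G')|$ distinct lines.

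The heart of the argument is then: from the $\geq n-1$ distinct lines of $G'$, produce $\geq n$ distinct lines of $G'$'s successor $G$, \emph{or} exhibit a universal line of $G$ directly. The natural idea is that each line $\ell$ of $G'$ lifts to at least one line of $G$ (via $\pi^{-1}$ on the generating pair), these lifts are distinct for distinct $\ell$, and then we need just \emph{one} genuinely new line — the obvious candidate being $\Line{v_1 v_2}^G$, the line through the two twins. I would compute $\Line{v_1 v_2}^G$: a vertex $x \neq v_1,v_2$ is collinear with $v_1,v_2$ iff $v_1$ (equivalently $v_2$) is between $x$ and the other twin, i.e.\ $\Distx{x,v_1} + 1 = \Distx{x,v_2}$, which is impossible since those distances are equal, or $x$ is between $v_1$ and $v_2$, impossible as they are adjacent. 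Hence $\Line{v_1v_2}^G = \{v_1, v_2\}$ — a two-point line. So the real question becomes whether $\{v_1,v_2\}$ is a line of $G'$'s lift, i.e.\ whether some line of $G'$ already lifts to exactly $\{v_1,v_2\}$; since every line of $G'$ has at least two points not both collapsing together, no line of $G'$ lifts to the two-element set $\{v_1,v_2\}$, so $\{v_1,v_2\}$ is new — \textbf{unless} $G$ has only the trivial lines, which would force $G$ to have a universal line or be tiny, handled separately.

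I expect the main obstacle to be the bookkeeping in the lifting step: showing the map (line of $G'$) $\mapsto$ (a chosen lift) is injective into the lines of $G$, because two different lines $\ell_1 \neq \ell_2$ of $G'$ might a priori lift to the same line of $G$ if the splitting "merges" them — this is where the controlled distance behavior must be invoked carefully, distinguishing whether $v \in \ell_i$ or not, and whether $\ell_i$ is generated by a pair avoiding $v$. A secondary subtlety is the degenerate situation where $G'$ has few vertices or where $v_1,v_2$ together with $v$'s old neighborhood create new short-cuts; I would isolate small cases ($n$ small) and the universal-line case and dispatch them by hand, so that the counting argument runs cleanly for the generic case.
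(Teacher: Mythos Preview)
Your inductive strategy is natural but it is \emph{not} the route the paper takes, and it contains a genuine gap at exactly the point you flag as ``the cleanest sub-lemma''. The claim ``if $G'$ has a universal line then $G$ has a universal line'' is false. Take $G'=K_{1,m}$ with $m\ge 2$ and split the center $v$ into adjacent twins $v_1,v_2$. In $G$ one checks that $\Line{v_1v_2}=\{v_1,v_2\}$, that $\Line{v_1\ell}=\{v_1\}\cup\{\text{leaves}\}$ misses $v_2$, and that $\Line{\ell_1\ell_2}=\{v_1,v_2,\ell_1,\ell_2\}$ misses the other leaves; so $G$ has no universal line. Your heuristic ``$v_1$ is between $v_2$ and anything far'' cannot hold, since $d(v_2,v_1)+d(v_1,x)=1+d(v,x)\neq d(v,x)=d(v_2,x)$ for every $x$. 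The sub-lemma is only true when the universal line of $G'$ admits a generating pair \emph{avoiding} $v$, and the star shows this can fail.

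Once that sub-lemma falls, the induction does not close. If $G'$ has a universal line but few lines overall, your lifting plus the single new line $\{v_1,v_2\}$ need not reach $n$; and ``dispatch by hand'' does not cover this, because the obstruction is structural (it can occur at any stage of the splitting, not just for small $n$). One could try to prove that whenever the universal line of $G'$ is generated only through $v$ the graph $G'$ already has at least $|V(G')|$ lines, but that is a separate theorem you have not argued, and it is not obviously easier than the target statement.

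The paper avoids this difficulty by abandoning step-by-step induction altogether. It looks at $G$ globally via the surjection $f:V(G)\to V(H)$ onto the original bipartite graph, groups vertices into ``blobs'' $f^{-1}(x)$, and counts three disjoint families of lines: the $\binom{p}{2}$ lines from pairs inside rich blobs (each uniquely determined), a batch of $\binom{\lceil (n-p)/k\rceil}{2}$ lines from pairs of trivial blobs sharing a rich neighbour, and at least $2k$ further lines from rich-blob/trivial-blob pairs. A short analytic estimate (plus a finite check for $n\le 39$) shows the sum is at least $n$. The key structural input is exactly the line description you wrote down, but used once for the final graph rather than iterated.
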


%% ** Few lemmas

In order to prove Theorem~\ref{thm:main} we use a few known results
that we recall here.

\begin{lemma}
  \label{lem:disconnected}
  If $G$ is a disconnected graph on two vertices or more, then $G$
  has the de Bruijn-Erd\H{o}s property.
\end{lemma}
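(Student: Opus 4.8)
The plan is to produce a universal line by hand, which is enough by definition of the de Bruijn-Erd\H os property. Since $G$ is disconnected and has at least two vertices, choose vertices $u$ and $v$ lying in two distinct connected components of $G$. Throughout I adopt the usual convention that $\Distx{x,y}=\infty$ whenever $x$ and $y$ lie in different components of $G$, so that $(V(G),\Distm)$ is an extended metric space on which betweenness, collinearity, and lines are defined exactly as in the introduction (the betweenness equation $\Distx{a,b}+\Distx{b,c}=\Distx{a,c}$ being read in $\mathbb{R}\cup\{\infty\}$).

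First I would note $\Distx{u,v}=\infty$ because $u$ and $v$ are in different components. Then, given any vertex $w\in V(G)\setminus\{u,v\}$, observe that $w$ belongs to exactly one component of $G$, hence it lies in a component different from that of $u$, or in a component different from that of $v$ (or both); in particular at least one of $\Distx{u,w}$, $\Distx{w,v}$ equals $\infty$. Therefore $\Distx{u,w}+\Distx{w,v}=\infty=\Distx{u,v}$, i.e. $w$ is between $u$ and $v$. Since $u$, $v$, $w$ are pairwise distinct, they form a collinear triple, so $w\in\Line{uv}$.

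As $w$ was an arbitrary vertex, this shows $\Line{uv}=V(G)$, so $G$ has a universal line and hence the de Bruijn-Erd\H os property (the degenerate case $|V(G)|=2$ being immediate, since $\Line{uv}\supseteq\{u,v\}=V(G)$ by definition). There is essentially no obstacle here: the only point requiring care is the bookkeeping with the symbol $\infty$ in the betweenness identity, and once the convention for distances between components is fixed, the argument reduces to the one-line case distinction above on which of $u,v$ the vertex $w$ fails to share a component with.
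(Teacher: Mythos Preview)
Your argument is internally consistent, but it rests on a convention that is the \emph{opposite} of the one the paper uses. In the paper's setting, when $u$ and $v$ lie in different components one has $\Line{uv}=\{u,v\}$, not $\Line{uv}=V(G)$: no third point is deemed between $u$ and $v$, because betweenness is taken only within a genuine (finite-valued) metric and the identity $\infty+d=\infty$ is never invoked. So under the paper's conventions your central claim is false and the proof does not go through.

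The paper's proof runs in the other direction precisely because of this. Since every cross-component pair $(u,v)\in V_1\times V_2$ generates a two-element line $\{u,v\}$, distinct pairs give distinct lines, yielding $n_1 n_2$ lines from a partition $|V_1|=n_1$, $|V_2|=n_2$. A short case analysis finishes: if $n_1,n_2\ge 2$ then $n_1 n_2\ge n$; if exactly one part is a singleton, one extra line comes from any pair inside the other part; if both are singletons the unique line is universal. Thus the paper obtains the de~Bruijn--Erd\H{o}s property by \emph{counting} many small lines rather than by exhibiting a universal one.

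If you insist on working in the extended metric $\mathbb{R}\cup\{\infty\}$ with your arithmetic, the lemma is indeed trivial for the reason you give; but that changes the line structure itself, hence the object being studied, and is not the framework of this paper.
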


\begin{proof}
  Let $G$ be a disconnected graph on $n$ vertices, then there exists a
  non trivial partition of its vertex set into two sets $V_1$ of $n_1$
  vertices and $V_2$ of $n_2$ vertices such that no edge goes between
  $V_1$ and $V_2$. Therefore any pair of vertices $(u,v)$ in
  $V_1\times V_2$ generates a line of cardinality 2: $\Line{uv} =
  \{u,v\}$. We thus have $n_1n_2$ different lines. If both $n_1$ and
  $n_2$ are greater than or equal to 2, $n_1n_2 \geq n$. If exactly
  one of them (say $n_1$) is 1, then we get $n_2 = n-1$, another line
  is obtained by taking any pair of vertices in $V_2$. If both of them
  are 1, then the graph has two vertices and the line they define is
  universal.
\end{proof}

\begin{lemma}[Theorem 6 in~\cite{chvatal2018}]
  \label{lem:bipuniv}
  If $G$ is a bipartite connected graph, then any pair of adjacent
  vertices generates a universal line.
\end{lemma}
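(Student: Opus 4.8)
The plan is to characterize exactly which vertices lie on the line $\Line{ab}$ when $a$ and $b$ are adjacent, and then to use bipartiteness to show this set is all of $X$. First I would record that since $a$ and $b$ are adjacent we have $\Distx{a,b}=1$. For any third vertex $x$, betweenness with respect to $a,b$ collapses to a rigid condition: $x$ cannot lie strictly between $a$ and $b$, because that would force $\Distx{a,x}+\Distx{x,b}=1$ with both summands positive integers, which is impossible. Hence $x\in\Line{ab}$ precisely when $a$ lies between $x$ and $b$, or $b$ lies between $a$ and $x$, and each of these two cases rearranges to the single equation $\lvert \Distx{x,a}-\Distx{x,b}\rvert = 1$.

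The second step is to verify that this equation holds for every vertex $x$ distinct from $a$ and $b$. The triangle inequality immediately gives the upper bound $\lvert \Distx{x,a}-\Distx{x,b}\rvert \le \Distx{a,b} = 1$, so it remains only to rule out equality of the two distances. This is where connectedness and bipartiteness enter: connectedness guarantees all distances are finite, and I would invoke the standard parity property of bipartite graphs, namely that the length of every $x$--$v$ walk has a parity determined solely by the parts containing $x$ and $v$. Since $a$ and $b$ are adjacent they lie in opposite parts, so $\Distx{x,a}$ and $\Distx{x,b}$ have opposite parities and in particular cannot be equal. Combined with the upper bound, the difference is exactly $1$, so $x\in\Line{ab}$.

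Putting the two steps together, every vertex other than $a$ and $b$ lies on $\Line{ab}$, while $a$ and $b$ lie on it by definition, so $\Line{ab}=X$ is universal. The only real content is the parity observation in the second step; everything else is bookkeeping about the definition of betweenness, which is why the hypothesis of a single adjacent pair is so convenient. Rather than cite the parity fact, I would likely prove it inline to keep the note self-contained: if $\Distx{x,a}=\Distx{x,b}=d$, then concatenating a shortest $x$--$a$ path, the edge $ab$, and a shortest $b$--$x$ path yields a closed walk of odd length $2d+1$, which forces an odd cycle and contradicts bipartiteness. The main (indeed only) obstacle is making this parity step airtight, and it is genuinely short.
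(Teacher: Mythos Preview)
Your argument is correct and follows essentially the same route as the paper: the paper's two-sentence proof also reduces to the observation that for adjacent $a,b$ the line $\Line{ab}$ contains every vertex except those equidistant from $a$ and $b$, and that such an equidistant vertex would force an odd cycle. You have simply spelled out in full the bookkeeping (the triangle-inequality bound and the odd closed walk $2d+1$) that the paper leaves implicit.
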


\begin{proof}
  In a connected graph $G$, the line defined by a pair of adjacent vertices contains every vertex except those equidistant to both vertices. Since $G$ is connected, this enforces the existence of an odd cycle.
\end{proof}

%% ** Preuve
%% *** Début Preuve
\section*{Proof of Theorem~\ref{thm:main}}

Let $G$ be a graph on $n$ vertices obtained from a bipartite graph $H$ by repeated splitting of vertices into adjacent twins. In other words, there is a mapping $f : V(G) \rightarrow V(H)$ which assigns to each vertex of $G$ the original vertex in $H$ from which it has been derived. Note that this mapping is surjective (we only split and never delete so every vertex of $H$ has at least one vertex in $G$ being mapped to it). For a given vertex $x$ of $H$, the {\em blob} of $x$ is its preimage by $f$, denoted by $f^{-1}(x)$. Observe that a blob forms a clique in $G$. A blob is {\em trivial} if it is a singleton, and {\em rich} otherwise. Figure~\ref{fig:blobs} depicts a transformation from a bipartite graph $H$ to a graph $G$ with five rich blobs and two trivial blobs.

%% *** Figure

\begin{figure}[ht]
    \begin{centering}
      
      \begin{tikzpicture}[
          vertex/.style={circle, draw, fill=black, minimum size=2pt, inner sep=0pt},
          scale = .6
        ]

        \pgfmathsetmacro{\r}{.2}
        \pgfmathsetmacro{\t}{.6}

        % Ensemble U (ligne du haut)
        \node[vertex] (u1) at (0, 2) {};
        \node[vertex] (u2) at (2, 2) {};
        \node[vertex] (u3) at (4, 2) {};
        
        % Ensemble V (ligne du bas)
        \node[vertex] (v1) at (0, 0) {};
        \node[vertex] (v2) at (2, 0) {};
        \node[vertex] (v3) at (4, 0) {};

        \node[vertex] (z) at (3,1) {};
        \draw (z) -- (u2);
        \draw (z) -- (v3);
        
        % Arêtes entre U et V (exemple de graphe biparti)
        \draw (u1) -- (v1);
        \draw (u1) -- (u2);
        \draw (u2) -- (v2);
        \draw (u2) -- (u3);
        \draw (u3) -- (v3);
        \draw (v1) -- (v2);
        \draw (v2) -- (v3);

        \begin{scope}[xshift=6cm, yshift=4cm]
        
        % Ensemble U (ligne du haut)
        \node[vertex] (u1) at (0, 2) {};
        \node[vertex] (u2) at (2, 2) {};
        \node[vertex] (u3) at (4, 2) {};
        
        % Ensemble V (ligne du bas)
        \node[vertex] (v1) at (0, 0) {};
        \node[vertex] (v21) at (1.9, 0.1) {};
        \node[vertex] (v22) at (2.1,-0.1) {};
        \node[vertex] (v3) at (4, 0) {};

        \node[vertex] (z) at (3,1) {};
        \draw (z) -- (u2);
        \draw (z) -- (v3);
        
        % Arêtes entre U et V (exemple de graphe biparti)
        \draw (u1) -- (v1);
        \draw (u1) -- (u2);
        \draw (u2) -- (v21);
        \draw (u2) -- (v22);
        \draw (u2) -- (u3);
        \draw (u3) -- (v3);
        \draw (v1) -- (v21);
        \draw (v22) -- (v3);
        \draw (v1) -- (v22);
        \draw (v21) -- (v3);
        \draw (v21) -- (v22);

        \draw[dotted] (2,0) circle (.3);

        \end{scope}
        \begin{scope}[xshift = 12cm]

        % Ensemble U (ligne du haut)
        \node[vertex] (u11) at ($(195:\t) + (0, 2)$) {};
        \node[vertex] (u12) at ($(75:\t) + (0, 2)$) {};
        \node[vertex] (u13) at ($(315:\t) + (0, 2)$) {};
        \node[vertex] (u21) at ($(45:\r) + (2, 2)$) {};
        \node[vertex] (u22) at ($(-135:\r) + (2, 2)$) {};
        \node[vertex] (u3) at (4, 2) {};
        
        % Ensemble V (ligne du bas)
        \node[vertex] (v1) at (0, 0) {};
        \node[vertex] (v21) at ($(45:\r) + (2, 0)$) {};
        \node[vertex] (v22) at ($(-135:\r) + (2, 0)$) {};
        \node[vertex] (v31) at ($(15:\t) + (4, 0)$) {};
        \node[vertex] (v32) at ($(135:\t) + (4, 0)$) {};
        \node[vertex] (v33) at ($(255:\t) + (4, 0)$) {};

        \node[vertex] (z1) at ($(45:\r) + (3, 1)$) {};
        \node[vertex] (z2) at ($(-135:\r) + (3, 1)$) {};
        \draw (z1) -- (v31);
        \draw (z1) -- (v32);
        \draw (z1) -- (v33);
        \draw (z1) -- (u21);
        \draw (z1) -- (u22);
        \draw (z2) -- (v31);
        \draw (z2) -- (v32);
        \draw (z2) -- (v33);
        \draw (z2) -- (u21);
        \draw (z2) -- (u22);
        \draw (z1) -- (z2);

        % Les blobs
        \draw (u11) -- (u12) -- (u13) -- (u11);
        \draw (u21) -- (u22);
        \draw (v21) -- (v22);
        \draw (v31) -- (v32) -- (v33) -- (v31);

        % u2 - v2
        \draw (u21) -- (v21);
        \draw (u21) -- (v22);
        \draw (u22) -- (v21);
        \draw (u22) -- (v22);

        % u1 - u2
        \draw (u11) -- (u21) -- (u12) -- (u22) -- (u13) -- (u21);
        \draw (u11) -- (u22);

        % u2 -- u3
        \draw (u21) -- (u3) -- (u22);

        % v3 - v2
        \draw (v31) -- (v21) -- (v32) -- (v22) -- (v33) -- (v21);
        \draw (v31) -- (v22);

        % v2 -- v1
        \draw (v21) -- (v1) -- (v22);

        % u1 -- v1
        \draw (u11) -- (v1) -- (u12);
        \draw (u13) -- (v1);

        % u3 -- v1
        \draw (v31) -- (u3) -- (v32);
        \draw (v33) -- (u3);

        \draw[dotted] (2,0) circle (\r + .2);
        \draw[dotted] (3,1) circle (\r + .2);
        \draw[dotted] (4,0) circle (\t + .2);
        \draw[dotted] (0,2) circle (\t + .2);
        \draw[dotted] (2,2) circle (\r + .2);
        
        \end{scope}

        \node at (1.7,4.8) {\scriptsize one split};
        \node at (14.8,4.8) {\scriptsize several splits};
        \node at (2,-.5) {\scriptsize $H$};
        \node at (14,-.8) {\scriptsize $G$};
        
        \draw [->] (2,3) to [out=90, in=180] (5,5);
        \draw [->, dashed] (11,5) to [out=0, in=90] (14,3);
        
      \end{tikzpicture}

    \end{centering}
    \caption{Splitting vertices from some graph $H$ to some graph $G$.}
    \label{fig:blobs}
  \end{figure}
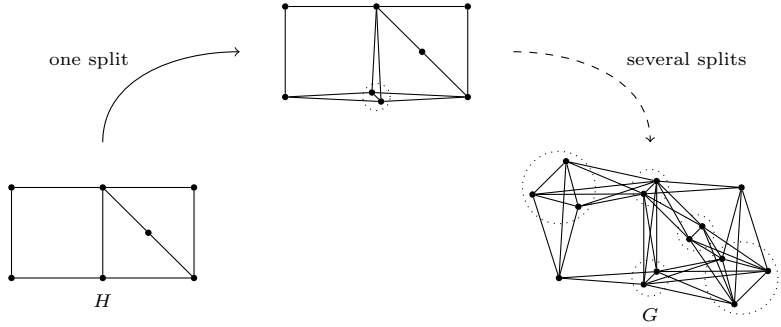

%% *** Preuve continued

If $G$ is disconnected, then by Lemma~\ref{lem:disconnected}, we are done. So, we now assume that $G$ is connected.

\begin{claim}
  \label{claim:connected}
  We may assume that $G$ is connected.
\end{claim}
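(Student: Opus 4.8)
The plan is to dispose of the disconnected case immediately via Lemma~\ref{lem:disconnected}, so that every subsequent argument may be carried out under the hypothesis that $G$ is connected. Concretely, suppose $G$ is disconnected. Since $n \geq 2$ by the hypothesis of Theorem~\ref{thm:main}, Lemma~\ref{lem:disconnected} applies verbatim and yields that $G$ has the de~Bruijn-Erd\H{o}s property, that is, $G$ admits a universal line or has at least $n$ distinct lines. This is precisely the conclusion of Theorem~\ref{thm:main}, so in the disconnected case there is nothing further to establish.

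It is worth noting that Lemma~\ref{lem:disconnected} requires no structural assumption beyond disconnectedness together with $n \geq 2$; in particular it is indifferent to whether $G$ has been obtained from a bipartite graph by repeatedly splitting vertices. Hence this branch of the argument does not even use the full strength of the hypothesis, and the reduction is entirely clean.

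Consequently, for the remainder of the proof we assume $G$ is connected. I anticipate no obstacle in this claim: its content is nothing more than the invocation of the already-proved Lemma~\ref{lem:disconnected}. The genuine difficulty of Theorem~\ref{thm:main} lies in the connected case that now follows, where the blob structure of the map $f : V(G) \to V(H)$ must be exploited and combined with Lemma~\ref{lem:bipuniv} to either produce a universal line directly or to count enough lines.
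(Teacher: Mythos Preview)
Your proposal is correct and follows exactly the paper's approach: invoke Lemma~\ref{lem:disconnected} for the disconnected case (using $n\geq 2$), thereby reducing to the connected case. There is nothing to add.
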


Let us now focus on the line generated by any two vertices $u$ and $v$ of $G$. 

\begin{itemize}
    \item If $u$ and $v$ are in the same blob, then note that $\Line{uv} = \{u,v\}$, so it is uniquely generated.
    \item If $u$ and $v$ are in different blobs (meaning $f(u) \neq f(v)$), consider the line \Line{f(u)f(v)} in $H$. Then, the line \Line{uv} consists of $u$, $v$ and all points $z$ such that $f(z) \in \Line{f(u)f(v)}$ and $f(z) \notin \{ f(u), f(v)\}$.
\end{itemize}

The second case allows us to conclude the following: If two trivial
blobs $u$ and $v$ are adjacent, then by Lemma~\ref{lem:bipuniv} and
the previous discussion, line \Line{uv} is universal. So, we may further assume the following:

\begin{claim}
  \label{claim:trivialstable}
  We may assume that trivial blobs form an independent set.
\end{claim}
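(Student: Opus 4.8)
The plan is to establish the contrapositive of what the claim asserts: I will show that if two trivial blobs are adjacent, then $G$ already possesses a universal line, so that it has the de Bruijn--Erd\H{o}s property and Theorem~\ref{thm:main} holds in that case; consequently, in the remainder of the argument we lose no generality by assuming that the trivial blobs form an independent set.

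First I would make explicit the elementary correspondence between $G$ and $H$ carried by the blob map $f$: for any two vertices $x,y$ of $G$ lying in distinct blobs one has $xy\in E(G)$ if and only if $f(x)f(y)\in E(H)$, and, more generally, $\Distx{x,y}$ computed in $G$ equals $\Distx{f(x),f(y)}$ computed in $H$. This follows by induction on the number of splittings, since replacing a vertex by two adjacent twins turns it into a clique with unchanged external neighbourhood, so distances between vertices of different blobs are unaffected. In particular $H$ is connected (it is a quotient of the connected graph $G$) and bipartite by hypothesis, so Lemma~\ref{lem:bipuniv} applies to it.

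Now suppose $u$ and $v$ are trivial blobs with $uv\in E(G)$. Since $u$ and $v$ lie in distinct blobs, $f(u)\ne f(v)$ and $f(u)f(v)\in E(H)$, so by Lemma~\ref{lem:bipuniv} the line $\Line{f(u)f(v)}$ equals $V(H)$. By the case analysis recorded just before the claim, $\Line{uv}$ is the set consisting of $u$, $v$, and every vertex $z$ with $f(z)\in\Line{f(u)f(v)}$ and $f(z)\notin\{f(u),f(v)\}$. Because the blobs of $u$ and $v$ are trivial, $u$ and $v$ are the only vertices whose image lies in $\{f(u),f(v)\}$; every other vertex $z$ satisfies $f(z)\in V(H)=\Line{f(u)f(v)}$ and hence belongs to $\Line{uv}$. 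Therefore $\Line{uv}=V(G)$ is universal, which finishes this case and, with it, the claim.

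I do not expect a genuine obstacle here: the only point requiring care is the bookkeeping for the $G$--$H$ correspondence used in the second paragraph, and even that is routine. The content of the claim is simply a normalization step that lets the later stages of the proof assume that no edge joins two singleton blobs.
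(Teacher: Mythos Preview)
Your proposal is correct and follows essentially the same approach as the paper: both argue that if two trivial blobs $u,v$ are adjacent then $f(u)f(v)$ is an edge of the connected bipartite graph $H$, so $\Line{f(u)f(v)}=V(H)$ by Lemma~\ref{lem:bipuniv}, and then the description of $\Line{uv}$ in terms of $\Line{f(u)f(v)}$ together with the triviality of the two blobs yields $\Line{uv}=V(G)$. You are simply more explicit than the paper about the distance-preserving correspondence between $G$ and $H$ and about why $H$ is connected.
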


As a direct corollary of Claims~\ref{claim:connected}
and~\ref{claim:trivialstable}, we can further our assumptions to the following:

\begin{claim}
  \label{claim:trivialconnected}
  Every trivial blob has a rich blob neighbor.
\end{claim}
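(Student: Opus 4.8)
The plan is to obtain the claim as an immediate consequence of the two preceding reductions, so the argument will be very short. By Claim~\ref{claim:connected} we may assume that $G$ is connected; combined with the hypothesis $n \geq 2$ of Theorem~\ref{thm:main}, this forces every vertex of $G$ to have at least one neighbour. So I would fix a trivial blob, that is, a vertex $u$ with $f^{-1}(f(u)) = \{u\}$, and choose any neighbour $w$ of $u$ in $G$.

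Next I would observe that $w$ cannot belong to the blob of $u$: that blob is the singleton $\{u\}$, which contains no neighbour of $u$. Hence $f(w) \neq f(u)$, so $w$ lies in a blob distinct from that of $u$. If the blob $f^{-1}(f(w))$ were trivial as well, then $u$ and $w$ would be two adjacent vertices each forming a trivial blob, contradicting Claim~\ref{claim:trivialstable}, under which the trivial blobs form an independent set. Therefore $f^{-1}(f(w))$ is rich, and $u$ has a rich blob neighbour, which is exactly the assertion.

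There is no genuine obstacle here; the only points requiring care are the implicit use of ``every vertex has a neighbour'', which relies simultaneously on connectivity (Claim~\ref{claim:connected}) and on $n \geq 2$, and the correct reading of ``the trivial blobs form an independent set'' as ``no edge joins two trivial blobs''. Once those are in place the claim is a one-line deduction, and it is stated separately only to record it as a standing assumption for the remainder of the proof of Theorem~\ref{thm:main}.
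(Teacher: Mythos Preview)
Your argument is correct and is exactly the deduction the paper has in mind: it presents Claim~\ref{claim:trivialconnected} as a direct corollary of Claims~\ref{claim:connected} and~\ref{claim:trivialstable}, and you have simply spelled out that corollary in full. There is nothing to add or change.
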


We may now define a pair of parameters that will allow us to lower bound
the number of lines in $G$:
\begin{itemize}
    \item let $p$ be the number of vertices in $G$ belonging to a rich blob (in Figure~\ref{fig:blobs}, $p=12$),
    \item let $k$ be the number of rich blobs having at least one trivial   blob as a neighbor (in Figure~\ref{fig:blobs}, $k=4$).
\end{itemize}

Observe that since each rich blob has at least two vertices and there
are at least $k$ reach blobs, then

\begin{equation}
  \label{eq:pk}
  p \geq 2k
\end{equation}

Any line generated by a pair of vertices among rich blobs is uniquely
defined (if both $u$ and $v$ are in a common blob, the line has
cardinality 2, if in different blobs, the line intersects exactly two
blobs partially isolating $u$ and $v$). Thus, we get $\binom{p}{2}$
distinct lines. Therefore if $p = n$, we obtain $\binom{n}{2}$ lines
and $G$ has the de~Bruijn-Erd\H{o}s property. Moreover, if no
splitting has been made, then $p = 0$ and $G$ is bipartite and we
know that has the de~Bruijn-Erd\H{o}s property. So, we may assume the following:

\begin{claim}
  We may assume that $2 \leq p \leq n-1$.
\end{claim}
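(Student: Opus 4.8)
The plan is to treat this as a short cleanup claim: I must show that the excluded values $p = 0$, $p = 1$, and $p = n$ either cannot occur or already yield the de~Bruijn-Erd\H{o}s property, after which $2 \le p \le n-1$ follows since trivially $p \le n$. The first observation is that $p$ is a sum of the sizes of the rich blobs, each of which is at least $2$; hence $p = 0$ or $p \ge 2$, so $p = 1$ is impossible.

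Next I would handle $p = 0$. In this case every blob is a singleton, so $f$ is injective; since $f$ is also surjective, it is a bijection and $G$ is isomorphic to the bipartite graph $H$. By Claim~\ref{claim:connected} we may take $G$ connected, and since $n \ge 2$ it has an edge, so Lemma~\ref{lem:bipuniv} gives a universal line and $G$ has the de~Bruijn-Erd\H{o}s property. For $p = n$, every vertex lies in a rich blob, so the family of $\binom{p}{2}$ pairwise distinct lines already produced (one per pair of vertices from the rich blobs) has size $\binom{n}{2}$; this is at least $n$ as soon as $n \ge 3$. If instead $n = 2$, then $G$ is a connected graph on two vertices, i.e.\ $K_2$, whose unique line is universal, so this corner is also settled.

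Combining these cases, whenever $p \notin \{2, 3, \dots, n-1\}$ the graph $G$ already has the de~Bruijn-Erd\H{o}s property, so we may assume $2 \le p \le n-1$. There is no serious obstacle in this claim; the only place needing attention is the degenerate case $n = 2$ (equivalently $p = n = 2$), where the bound $\binom{n}{2} \ge n$ fails and one falls back on the fact that $K_2$ has a universal line.
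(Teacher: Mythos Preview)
Your argument is correct and follows essentially the same route as the paper: rule out $p=n$ via the $\binom{p}{2}$ uniquely generated lines, and rule out $p=0$ by observing that then no splitting occurred so $G=H$ is bipartite (and connected), hence has a universal line. You are in fact more careful than the paper, explicitly noting that $p=1$ is impossible and handling the degenerate case $n=2$ where $\binom{n}{2}\geq n$ fails.
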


Whenever $p < n$, there are some trivial blobs (exactly $n-p$). Thus
by Claim~\ref{claim:trivialconnected}, $k$ cannot be 0. Then, there
exists a rich blob $B$ which has at least $\lceil \frac{n-p}{k}\rceil$
trivial blobs as neighbors. Consider the set of lines generated by
pairs of trivial blobs adjacent to $B$. These trivial blobs are all at
distance 2 one from each other and so any such lines takes only two of
them. We get a set of $\binom{\lceil \frac{n-p}{k} \rceil}{2}$ different
lines. These lines either take a blob completely or do not intersect
it. Thus they are distinct from the $\binom{p}{2}$ lines previously
defined.

Finally, for each vertex $u$ in one of the $k$ blobs having a trivial
neighbor, we consider a line \Line{uv} where $v$ is one of the
trivial blobs adjacent to $u$. These lines intersect exactly one blob
partially isolating $u$. Therefore they are distinct from each other
and new with respect to both previous families of lines. Since every rich blob
has at least two vertices, they form a family of at least $2k$
lines. In conclusion, the number of lines in $G$ is at least,
\begin{equation}
  \label{eq:base}
  \binom{p}{2} + \binom{\lceil \frac{n-p}{k} \rceil}{2} + 2k.
\end{equation}

The end of the proof amounts to proving that this quantity is greater
than or equal to $n$ for any choice of $p$ and $k$ such that $2 \leq p
\leq n-1$ and $1 \leq k \leq p/2$. It turns out that the additive term
of $2k$ is needed only for very specific cases when $n$ is
small. Cases for low values (namely when $n \leq 39$) are thus
examined through a basic computer program (see
Appendix~\ref{app:programme}). In general the first two terms
of~\eqref{eq:base} are enough.

Let $\epsilon = 1.531$ and $n_0 = 40$. These numbers have been chosen
in order to minimize $n_0$ while satisfying the following implications
for any $n \geq n_0$ and any real number $x \geq 1$. The proofs are
not difficult but require paper space. So we present them in
Appendix~\ref{app:inequalities}.
\begin{align}
  x \geq \epsilon \sqrt{n} &\Rightarrow \frac{x^2-x}{2} \geq n, \label{eq:1}\\
  x \geq \epsilon \sqrt{\frac{n}{2}} &\Rightarrow \frac{x^2 -x }{2} \geq \frac{n}{2}, \label{eq:2}\\
  x \leq \epsilon \sqrt{n} &\Rightarrow \frac{2n}{x} - 2 \geq \epsilon \sqrt{\frac{n}{2}}, \label{eq:3}\\
  \frac{2 \sqrt{n}}{\epsilon + \frac{2}{\sqrt{n}}}  & \geq \epsilon \sqrt{\frac{n}{2}}. \label{eq:4}
\end{align}
The basic sketch is the following: if $p$ is sufficiently large (greater than
$\epsilon \sqrt{n}$), then the first term is greater than $n$. If $p$
is sufficiently small (less than $\frac{2\sqrt{n}}{\epsilon + 2/\sqrt{n}}$), then the
second term is greater than $n$. If $p$ is in between, each term is at least
$n/2$ summing up to $n$.

\paragraph{Case 1: $p \geq \epsilon\sqrt{n}$.}
When $p$ is large enough we consider only the first term $\binom{p}{2}$. By~\eqref{eq:1}, 
\begin{align*}
  \binom{p}{2} & = \frac{p^2 - p}{2}\\
  & \geq n.
\end{align*}

\paragraph{Case 2: $p \leq \frac{2 \sqrt{n}}{\epsilon + \frac{2}{\sqrt{n}}}$.}
When $p$ is small enough, we consider only the second term. First we observe that $k$ being at most $p/2$, we have
$$\left\lceil \frac{n-p}{k} \right\rceil \geq \frac{n-p}{k} \geq \frac{2n}{p}-2.$$
By the upper bound on $p$, we obtain:
\begin{equation*}
  \frac{2n}{p}-2  \geq \epsilon \sqrt{n};
\end{equation*}
and using~\eqref{eq:1}, we conclude
\begin{equation*}
  \binom{\left\lceil \frac{n-p}{k} \right\rceil}{2} \geq \binom{\frac{2n}{p} - 2}{2} \geq n.
\end{equation*}

\paragraph{Case 3: $ \frac{2 \sqrt{n}}{\epsilon + \frac{2}{\sqrt{n}}} < p < \epsilon \sqrt{n}$.}
In this last case, we prove that both first terms of the
sum~\eqref{eq:base} provide at least $n/2$ lines. Observe that by~\eqref{eq:3} and~\eqref{eq:4},
\begin{align*}
  p < \epsilon \sqrt{n} & \text{ thus } \frac{2n}{p} - 2 \geq \epsilon \sqrt{\frac{n}{2}}, \\
  p > \frac{2 \sqrt{n}}{\epsilon + \frac{2}{\sqrt{n}}} & \text{ thus } p \geq \epsilon \sqrt{\frac{n}{2}}.
\end{align*}
Using~\eqref{eq:2} completes the proof.

\subsection*{Acknowledgment}
This research was done while G. Gamboa Quintero visited L. Beaudou in Clermont-Ferrand during Spring 2025 with the support of ERASMUS+. The author G. Gamboa Quintero was supported by Charles University project 
PRIMUS/24/SCI/012.

\bibliographystyle{alpha}
\bibliography{sample}

\newpage

%% ** Appendix
\appendix

\section{Python code checking triples up to $n = 39$}
\label{app:programme}

This script can be downloaded from the GitLab\footnote{\href{https://gitlab.limos.fr/labeaudo/check_triples_lines}{\tt https://gitlab.limos.fr/labeaudo/check\_triples\_lines}} of first author.

\begin{center}
  \begin{minipage}{.9\textwidth}

    \begin{Verbatim}[commandchars=\\\{\}]
\PYG{k+kn}{from} \PYG{n+nn}{numpy} \PYG{k+kn}{import} \PYG{n}{ceil}\PYG{p}{,} \PYG{n}{floor}
\PYG{k+kn}{from} \PYG{n+nn}{math} \PYG{k+kn}{import} \PYG{n}{comb}

\PYG{k}{def} \PYG{n+nf}{nbLines}\PYG{p}{(}\PYG{n}{n}\PYG{p}{,}\PYG{n}{p}\PYG{p}{,}\PYG{n}{k}\PYG{p}{):}
\PYG{+w}{    }\PYG{l+s+sd}{\PYGZsq{}\PYGZsq{}\PYGZsq{} Returns a lower bound on the number of lines in a graph G}
\PYG{l+s+sd}{    obtained by iteratively splitting the vertices of a bipartite}
\PYG{l+s+sd}{    graph H into true twins.}

\PYG{l+s+sd}{    Input: n, number of vertices of G}
\PYG{l+s+sd}{           p, number of vertices of G in rich blobs}
\PYG{l+s+sd}{           k, number of rich blobs with trivial neighbours}

\PYG{l+s+sd}{    Output: an integer lower bounding the number of lines in G.}
\PYG{l+s+sd}{    \PYGZsq{}\PYGZsq{}\PYGZsq{}}
    \PYG{k}{assert} \PYG{l+m+mi}{2} \PYG{o}{\PYGZlt{}=} \PYG{n}{p} \PYG{o+ow}{and} \PYG{n}{p} \PYG{o}{\PYGZlt{}=} \PYG{n}{n}\PYG{o}{\PYGZhy{}}\PYG{l+m+mi}{1} \PYG{o+ow}{and} \PYG{l+m+mi}{1}\PYG{o}{\PYGZlt{}=} \PYG{n}{k} \PYG{o+ow}{and} \PYG{n}{k} \PYG{o}{\PYGZlt{}=} \PYG{n}{p}\PYG{o}{/}\PYG{l+m+mi}{2}
    \PYG{n}{first} \PYG{o}{=} \PYG{n}{comb}\PYG{p}{(}\PYG{n}{p}\PYG{p}{,}\PYG{l+m+mi}{2}\PYG{p}{)}
    \PYG{n}{second} \PYG{o}{=} \PYG{n}{comb}\PYG{p}{((}\PYG{n+nb}{int}\PYG{p}{(}\PYG{n}{ceil}\PYG{p}{((}\PYG{n}{n}\PYG{o}{\PYGZhy{}}\PYG{n}{p}\PYG{p}{)}\PYG{o}{/}\PYG{n}{k}\PYG{p}{))),}\PYG{l+m+mi}{2}\PYG{p}{)}
    \PYG{n}{third} \PYG{o}{=} \PYG{l+m+mi}{2}\PYG{o}{*}\PYG{n}{k}
    \PYG{k}{return} \PYG{n}{first} \PYG{o}{+} \PYG{n}{second} \PYG{o}{+} \PYG{n}{third}

\PYG{k}{def} \PYG{n+nf}{main}\PYG{p}{(}\PYG{n}{n0}\PYG{p}{):}
\PYG{+w}{    }\PYG{l+s+sd}{\PYGZsq{}\PYGZsq{}\PYGZsq{} Main routine, enumerates all legal triples (n,p,k) for n}
\PYG{l+s+sd}{    between 3 and n0. And checks that the number of lines is at least}
\PYG{l+s+sd}{    n. If some triple fails, it raises an Exception.}
\PYG{l+s+sd}{    \PYGZsq{}\PYGZsq{}\PYGZsq{}}
    \PYG{k}{for} \PYG{n}{n} \PYG{o+ow}{in} \PYG{n+nb}{range}\PYG{p}{(}\PYG{l+m+mi}{3}\PYG{p}{,}\PYG{n}{n0}\PYG{o}{+}\PYG{l+m+mi}{1}\PYG{p}{):}
       \PYG{k}{for} \PYG{n}{p} \PYG{o+ow}{in} \PYG{n+nb}{range}\PYG{p}{(}\PYG{l+m+mi}{2}\PYG{p}{,}\PYG{n}{n}\PYG{p}{):}
            \PYG{k}{for} \PYG{n}{k} \PYG{o+ow}{in} \PYG{n+nb}{range}\PYG{p}{(}\PYG{l+m+mi}{1}\PYG{p}{,}\PYG{n+nb}{int}\PYG{p}{(}\PYG{n}{floor}\PYG{p}{(}\PYG{n}{p}\PYG{o}{/}\PYG{l+m+mi}{2}\PYG{p}{))}\PYG{o}{+}\PYG{l+m+mi}{1}\PYG{p}{):}
                \PYG{k}{if} \PYG{n}{nbLines}\PYG{p}{(}\PYG{n}{n}\PYG{p}{,}\PYG{n}{p}\PYG{p}{,}\PYG{n}{k}\PYG{p}{)} \PYG{o}{\PYGZlt{}} \PYG{n}{n}\PYG{p}{:}
                    \PYG{k}{raise} \PYG{n+ne}{Exception}\PYG{p}{(}\PYG{l+s+s2}{\PYGZdq{}We have a problem.\PYGZdq{}}\PYG{p}{)}
    \PYG{n+nb}{print}\PYG{p}{(}\PYG{l+s+s2}{\PYGZdq{}Checked.\PYGZdq{}}\PYG{p}{)}

\PYG{k}{if} \PYG{n+nv+vm}{\PYGZus{}\PYGZus{}name\PYGZus{}\PYGZus{}}\PYG{o}{==}\PYG{l+s+s2}{\PYGZdq{}\PYGZus{}\PYGZus{}main\PYGZus{}\PYGZus{}\PYGZdq{}}\PYG{p}{:}
    \PYG{n}{main}\PYG{p}{(}\PYG{l+m+mi}{39}\PYG{p}{)}

\end{Verbatim}

    \end{minipage}
  \end{center}

\newpage

\section{Proofs for $\epsilon$ an $n_0$}
\label{app:inequalities}

We want to prove that for $\epsilon = 1.531$ and $n_0 = 40$. The
following is true for any $n \geq n_0$ and any real number $x \geq 1$.
\begin{align}
  x \geq \epsilon \sqrt{n} &\Rightarrow \frac{x^2-x}{2} \geq n, \tag{\ref{eq:1}}\\
  x \geq \epsilon \sqrt{\frac{n}{2}} &\Rightarrow \frac{x^2 -x }{2} \geq \frac{n}{2}, \tag{\ref{eq:2}}\\
  x \leq \epsilon \sqrt{n} &\Rightarrow \frac{2n}{x} - 2 \geq \epsilon \sqrt{\frac{n}{2}},\tag{\ref{eq:3}}\\
  \frac{2 \sqrt{n}}{\epsilon + \frac{2}{\sqrt{n}}}  & \geq \epsilon \sqrt{\frac{n}{2}}. \tag{\ref{eq:4}}
\end{align}

First, observe that \eqref{eq:2} implies \eqref{eq:1} since for any $n
\geq n_0$, we can let $n'$ be $2n$ which is also greater than $n_0$
and then \eqref{eq:2} on $n'$ reads exactly as \eqref{eq:1} on
$n$. Now let us prove \eqref{eq:2}. Since $\epsilon \sqrt{\frac{n}{2}}
> 1$ by our choice of $n_0$ and $\epsilon$, and as function $f: x
\mapsto x^2 - x$ is monotone and increasing on $ \left [ 1/2, + \infty \right [$, we
have:
\begin{align*}
  \frac{x^2-x}{2} &\geq \frac{\epsilon^2 \frac{n}{2} - \epsilon \sqrt{\frac{n}{2}}}{2}\\
  &\geq \epsilon^2 \frac{n}{4} - \epsilon \sqrt{\frac{n}{8}}\\
  &\geq \frac{n}{2} + (\epsilon^2 - 2) \frac{n}{4} - \epsilon \sqrt{\frac{n}{8}}\\
  &\geq \frac{n}{2} + \frac{\sqrt{n}}{2} \left ( (\epsilon^2 -2)\frac{\sqrt{n}}{2} - \frac{\epsilon}{\sqrt{2}} \right).
\end{align*}
The last multiplicative term is increasing with $n$ since $\epsilon >
\sqrt{2}$ when $n = 40$ it is strictly positive ($\simeq 0.005$) thus
validating implications~\eqref{eq:1} and~\eqref{eq:2}.

Concerning implication~\eqref{eq:3},
\begin{align*}
  \frac{2n}{x} - 2 & \geq \frac{2n}{\epsilon\sqrt{n}} - 2\\
  & \geq \frac{2 \sqrt{2}}{\epsilon} \sqrt{\frac{n}{2}} - 2\\
  & \geq \epsilon \sqrt{\frac{n}{2}} + \left ( \frac{2 \sqrt{2}}{\epsilon} - 1 \right )\sqrt{\frac{n}{2}} - 2
\end{align*}
Since $\epsilon < 2 \sqrt{2}$, the additive term is increasing with $n$. It happens to be positive for $n_0$ ($\simeq 0.68$), proving implication~\eqref{eq:3}.

Finally, we check the last inequality~\eqref{eq:4}:
\begin{align*}
  \frac{2 \sqrt{n}}{\epsilon + \frac{2}{\sqrt{n}}}  \geq \epsilon \sqrt{\frac{n}{2}} & \Leftrightarrow 2 \sqrt{n} \geq \epsilon^2 \sqrt{\frac{n}{2}} + \epsilon \sqrt{2} \\
  & \Leftrightarrow \sqrt{n} \geq \frac{\epsilon \sqrt{2}}{2 - \epsilon^2/\sqrt{2}}
\end{align*}
The right hand side term is strictly less than 6.321 while the square
root of 40 is strictly greater than 6.324.

%% * Footer

\end{document}